\newcommand{\NN}{\mathbb N}
\newcommand{\ZZ}{\mathbb Z}
\newcommand{\QQ}{\mathbb Q}
\newcommand{\CC}{\mathbb C}
\newtheorem{theorem}{Theorem}
\begin{document}
\title{The Recovery of $\lambda$ from a Hilbert Polynomial}
\author[1]{Joseph Donato \thanks{Corresponding Author}\thanks{jdonato@lanl.gov}}
\affil[1]{Methods and Algorithms Group, XCP-4, X-Computational Physics Division, Los Alamos National Laboratory}
\author[2]{Monica Lewis \thanks{lewi1714@umn.edu}}
\affil[2]{School of Mathematics, University of Minnesota}
\date{}
\maketitle

\begin{abstract}
In the study of Hilbert schemes, the integer partition $\lambda$ helps researchers identify some geometric and combinatorial properties of the scheme in question.  To aid researchers in extracting such information from a Hilbert polynomial, we describe an efficient algorithm which can identify if $p(x)\in\QQ[x]$ is a Hilbert polynomial and if so, recover the integer partition $\lambda$ associated with it.
\end{abstract}
{\bf Keywords:} Hilbert Scheme, Hilbert Polynomials, Discrete Calculus, Polynomial Interpolation\\
{\bf Word Count:} 2126\\
{\bf Disclosure Statement:} The authors report there are no competing interests to declare.

\section{Introduction}
\indent The roots of this work can be traced back to the following theorem proved by Macaulay.
\begin{theorem}
\cite{Mac} Let $R=\CC[x_{0},...,x_{n}]$ and $p(x)\in\QQ[x]$, there exists ideals in $R$ with Hilbert polynomial $p(x)$ iff $p(x)$ can be written in the form $\sum_{i=1}^{r}\binom{x+\lambda_{i}-i}{\lambda_{i}-1}$ for some integer partition $\lambda=(\lambda_1,...,\lambda_r)$ where $n\geq\lambda_{1}\geq...\geq\lambda_{r}\geq1$.
\end{theorem}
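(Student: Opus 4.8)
The plan is to prove both implications by transporting the geometry to the combinatorics of monomial ideals and of the finite-difference operator on numerical polynomials. Throughout write $\Delta q(x)=q(x)-q(x-1)$, and recall the two facts that drive the reduction: (i) the Hilbert function of a graded quotient of $R$ agrees with its Hilbert polynomial for all $d\gg 0$, and passing from $I$ to its saturation changes neither; (ii) when $I$ is saturated of positive dimension and $\ell\in R_1$ is generic, $\ell$ is a non-zerodivisor on $R/I$, so the exact sequence $0\to (R/I)(-1)\xrightarrow{\ \ell\ }R/I\to R/(I+\ell R)\to 0$ shows that the Hilbert polynomial of $R/(I+\ell R)$---a quotient of $R/\ell R\cong\CC[x_0,\dots,x_{n-1}]$---is exactly $\Delta p$. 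Finally, from Pascal's identity $\Delta\binom{x+a}{b}=\binom{x+a-1}{b-1}$ one reads off the key dictionary: applying $\Delta$ to $\sum_{i=1}^{r}\binom{x+\lambda_i-i}{\lambda_i-1}$ yields $\sum_{i:\,\lambda_i\ge 2}\binom{x+(\lambda_i-1)-i}{(\lambda_i-1)-1}$, i.e.\ it deletes the first column of the Young diagram of $\lambda$, while the antidifference adds a column of height $r$ at the left, up to an additive constant.

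For the implication ``$\lambda$-expression $\Rightarrow$ Hilbert polynomial'' I would argue by explicit construction. Since $\binom{x+\lambda_i-i}{\lambda_i-1}$ is the Hilbert polynomial of a linear $\mathbb{P}^{\lambda_i-1}$ shifted by $i-1$, and $n\ge\lambda_1$ leaves room, I would build $X=V_1\cup\cdots\cup V_r\subseteq\mathbb{P}^n$ with $\dim V_i=\lambda_i-1$, the $V_i$ chosen in coordinates in a ``staircase'' pattern---equivalently, as the zero locus of a saturated lex-segment monomial ideal---so that $V_{i+1}$ meets $V_1\cup\cdots\cup V_i$ scheme-theoretically in a degree-$i$ hypersurface of $V_{i+1}$. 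Feeding this into the Mayer--Vietoris sequence $0\to\mathcal O_{X_{i+1}}\to\mathcal O_{X_i}\oplus\mathcal O_{V_{i+1}}\to\mathcal O_{X_i\cap V_{i+1}}\to 0$ and inducting on $r$ gives $\mathrm{HP}(X)=\sum_{i=1}^{r}\binom{x+\lambda_i-i}{\lambda_i-1}$; alternatively, induct on $n$, take a generic hyperplane, and apply the $\Delta\leftrightarrow$``delete a column'' dictionary above.

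For the converse, induct on $n$. We may assume $I$ is saturated with $I\ne 0$, so $\deg p\le n-1$. If $\deg p\le 0$, then $V(I)$ is zero-dimensional of some length $c\ge 0$, so $p\equiv c=\sum_{j=1}^{c}\binom{x+1-j}{0}$, which is the case $\lambda=(1^{c})$ (and $p\equiv 0$ covers the small-$n$ degeneracies). If $\deg p\ge 1$, fact (ii) makes $\Delta p$ the Hilbert polynomial of a subscheme of $\mathbb{P}^{n-1}$, so by induction $\Delta p=\sum_{i=1}^{r'}\binom{x+\mu_i-i}{\mu_i-1}$ with $n-1\ge\mu_1\ge\cdots\ge\mu_{r'}\ge 1$. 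Taking antidifferences term by term and letting $c$ be the leftover additive constant, $p(x)=c+\sum_{i=1}^{r'}\binom{x+(\mu_i+1)-i}{(\mu_i+1)-1}$, which---provided $c\ge 0$---is precisely the $\lambda$-expression for $\lambda=(\mu_1+1,\dots,\mu_{r'}+1,1^{c})$, with $\lambda_1=\mu_1+1\le n$.

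The crux---and the step I expect to be the genuine obstacle---is the nonnegativity of the constant $c$: the bare finite-difference bookkeeping does not rule out $c<0$ (for example $\Delta(2x-5)=2$ is of the required form even though $2x-5$ is not the Hilbert polynomial of any projective scheme), so one must use that $p$ actually comes from an ideal. This is where Macaulay's characterization of Hilbert functions is needed: the sequence $d\mapsto\dim(R/I)_d$ is constrained by Macaulay's growth bound (whose extremal case is realized by the lex-segment ideal, in the spirit of Gotzmann's persistence theorem), and translating that constraint to the eventual polynomial rules out $c<0$---indeed it identifies the admissible Hilbert polynomials as precisely the $\lambda$-expressions. Granting Macaulay's Hilbert-function theorem (which a self-contained account would reprove via Macaulay expansions and passage to lex-segment ideals), the combinatorial identifications above close both directions, and---run to exhaustion---the ``delete a column'' dictionary also yields the uniqueness of $\lambda$.
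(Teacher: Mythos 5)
First, a point of comparison: the paper does not prove this statement at all --- it is quoted from Macaulay via the citation \cite{Mac} and used as a black box --- so your proposal can only be measured against the standard literature proofs, not against an argument in the paper.

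Your outline follows the classical strategy (generic hyperplane section, the exact sequence $0\to (R/I)(-1)\to R/I\to R/(I+\ell R)\to 0$, the dictionary $\Delta\leftrightarrow$ ``delete the first column of $\lambda$'', induction on $n$), and the forward direction by building a staircase union of linear spaces is essentially fine, modulo small repairs: when $\lambda_{i+1}=1$ the phrase ``degree-$i$ hypersurface of $V_{i+1}$'' is meaningless (the point must simply be taken disjoint from the earlier components), one must actually check that the scheme-theoretic intersection $X_i\cap V_{i+1}$ is the reduced union of $i$ distinct hyperplanes of $V_{i+1}$ with no embedded structure, and the statement's edge cases ($I=0$, $p\equiv 0$, and the bound $\lambda_1\le n$ versus $\lambda_1\le n+1$) need to be pinned down. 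But the converse, as you yourself say, hinges entirely on $c\ge 0$, and this is exactly where your proposal stops being a proof: you assert that Macaulay's growth bound ``translates'' into $c\ge 0$, but that translation is not a routine bookkeeping step --- there is no elementary inequality comparing the Hilbert function of $R/(I+\ell R)$ (which is generally unsaturated) with its Hilbert polynomial that lets you sum the exact sequence and conclude. Filling it requires real content: either the explicit computation that the Hilbert polynomial of a (saturated) lex-segment ideal has precisely the $\lambda$-form, combined with Macaulay's theorem that every Hilbert function is attained by a lex ideal, or an argument of Gotzmann-persistence type. Moreover, once you grant the lex-attainability theorem in full strength, the cleaner route is to bypass your induction altogether and simply compute the Hilbert polynomial of the lex ideal directly --- so as written the proposal is a correct roadmap whose genuinely hard step (the nonnegativity of the constant, equivalently the lex-ideal computation) is named but not carried out, which is precisely the content the citation to Macaulay is standing in for.
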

\indent From this work, other researchers have been able to identify the smoothness of the associated Hilbert schemes as well as the sum of the Betti numbers via $\lambda$ \cite{SS,DLRUZ}.\\
\indent With the rich information embedded in $\lambda$ as well as the advent of software used for experimentation in algebraic geometry such as Macaulay2 \cite{M2} we recognize the utility of having an algorithm for identifying if $p(x)$ is a Hilbert polynomial and if so, recover $\lambda$.\\
\indent The outline of this paper is as follows.  In section 2 we describe a naive recovery algorithm as well as its severe limitations.  In section 3 we review some ideas from discrete derivatives.  Finally, in section 4, we employ the ideas in section 3 to derive the efficient and robust algorithm briefly mentioned in \cite{AL} and derive its worse case complexity.  

\section{Naive Algorithm}
\indent The naive algorithm is fairly simple to describe.  That is, if the user provides a polynomial $p(x)\in\QQ[x]$ of degree $n$ then the algorithm enumerates through all possible $\lambda$ such that $\lambda_1=n+1$ up to a certain size $r_{max}$ which must also be provided by the user.  The algorithm determines if there is a match by employing the polynomial interpolation theorem.\\
\indent To define this algorithm, let's begin by defining some subroutines for the sake of simplicity.
\begin{itemize}
  \item $nonIncrSeqs(m,n)$ which returns the set of non-increasing sequences of size $m$ with allowed values $1,...,n$.
  \item $generateDataPoints(p(x))$ which generates the $\deg(p(x))+1$ data points of $p(x)$ needed for comparing polynomials via polynomial interpolation.  Here, we will assume that our data points are $\{p(0), p(1), p(2),...,p(\deg(p(x)))\}$.
  \item $compare(\lambda, p_{data})$ which compares the $\deg(p(x))+1$ data points $p_{data}$ of $p(x)$ to the data points generated by Hilbert polynomial constructed from $\lambda$ and returns $true$ if they match and $false$ otherwise.
\end{itemize}

\indent With these definitions in mind, we define the naive algorithm as follows.\\
\begin{algorithm}[H]
\SetAlgoLined
 {\bf Input}: $p(x)\in\QQ[x]$.\\
 {\bf Input}: $r_{max}$: maximum size of $\lambda$ to search for.\\
 {\bf Output}: $\lambda$: if $p(x)$ is a Hilbert polynomial with $\lambda$ of size $\leq r_{max}$.\\
 $n\gets \deg(p(x))$\;
 $\lambda_{init}\gets (n+1)$\;
 $p_{data}\gets computeDataPoints(p(x))$\;
 \If{$compare(\lambda_{init}, p_{data})$}{
   \Return $\lambda_{init}$\;
 }
 
 \For {$j=1$ : $r_{max}-1$} {
   \For {$s\in nonIncrSeqs(j,n+1)$} {
     $\lambda\gets (\lambda_{init}, s)$\;
     \If{$compare(\lambda, p_{data})$} {
       \Return $\lambda$\;
     }
   }
 }
 \caption{Naive Recovery Algorithm}
\end{algorithm}
\indent Now let's draw our attention to some key takeaways of Algorithm 1.  First note that we must maintain that the first element in $\lambda$ must be $deg(p(x))+1$ since the degree of $\binom{x+\lambda-i}{\lambda-1}$ is $\lambda-1$ and ensures we minimize the set of redundant searches.\\
\indent From here, let's bring up some glaring set backs of the algorithm.  First, note that the algorithm doesn't precisely tell you if $p(x)$ is a Hilbert polynomial or not, it simply informs you if $p(x)$ is a Hilbert polynomial up to an upper bound for the size of $\lambda$.  This upper bound is necessary since if $p(x)$ is not a Hilbert polynomial and an upper bound is not in place then Algorithm 1 will not terminate.\\
\indent Now let's derive the complexity of the algorithm.  First, we will let $\mathcal{O}_{b}(n,k)$ be the complexity of computing the binomial coefficient $\binom{n}{k}$.  This convention will be employed here since the computation of the binomial coefficients can be done directly or via the memoization of pascals triangle depending on memory considerations.\\
\indent Next, let's determine the complexities of $compare$ and $nonIncrSeqs$ individually.  The complexity of $compare(\lambda, p_{data})$ is $\mathcal{O}\left(\sum_{x=0}^{\deg(p(x))}\sum_{i=1}^{r=|\lambda|}\mathcal{O}_{b}\left(x+\lambda_{i}-i,\lambda_{i}-1\right)\right)$ since for each data point $x\in\{0,1,...,\deg(p(x))\}$ we must compute $\sum_{i=1}^{r=|\lambda|}\binom{x+\lambda_{i}-i}{\lambda_{i}-1}$.\\
\indent Next, to determine the complexity of $nonIncrSeqs(m,n)$ it is important to note that the problem of computing non-increasing sequences can be re-framed as the problem of determining all possible non-integer solutions to $x_{1}+...+x_{n}=m$ or the weak composition of $m$ into $n$ parts where $x_{\ell}$ is the number of times $\ell$ occurs in the non-increasing sequence.  The complexity of generating all such weak compositions is $\mathcal{O}\left(m\cdot\binom{n+m+1}{m}\right)$ \cite{Page}.\\
\indent With the complexities of our subroutines determined, we can now determine the complexity of Algorithm 1.  The complexity of $nonIncrSeqs(j, n+1)$ is $\mathcal{O}\left(j\cdot\binom{n+j+2}{j}\right)$ and considering this along with the complexity of $compare(\lambda, p_{data})$ and the fact that $|\lambda|=j+1$ in the inner loop this tells us that the complexity of the inner loop is $\mathcal{O}\left(j\cdot\binom{\deg(p(x))+j+2}{j}\cdot\sum_{x=0}^{\deg(p(x))}\sum_{i=1}^{j+1}\mathcal{O}_{b}\left(x+\lambda_{i}-i,\lambda_{i}-1\right)\right)$.  Finally, since the outer loop iterates from $1$ to $r_{max}-1$ then we can conclude that the complexity of the whole algorithm is 
$$\mathcal{O}\left(\sum_{j=1}^{r_{max}-1}\left[j\cdot\binom{\deg(p(x))+j+2}{j}\cdot\sum_{x=0}^{\deg(p(x))}\sum_{i=1}^{j+1}\mathcal{O}_{b}\left(x+\lambda_{i}-i,\lambda_{i}-1\right)\right]\right).$$
\indent Overall, this leaves more to be desired in terms of efficiency.  The study of discrete derivatives will subsequently come to our aid in addressing this.

\section{Discrete Derivatives}
\indent Let $k$ be a field containing $\QQ$, so that $\NN\subseteq k$.  Then let $S=k^{\NN}$ be the ring of sequences $(f(0),f(1),f(2),...)$ with entries in $k$ where $f:\NN\longrightarrow k$, and addition and subtraction are defined entry-wise.  A polynomial $f\in k[x]$ defines a function $k\rightarrow k$ and by restriction $f|_{\NN}:\NN\longrightarrow k$.  For $f\in S$, we define the $k$-linear operator $\Delta:S\longrightarrow S$ which we refer to as the discrete derivative:
$$(\Delta f)(x)=f(x+1)-f(x)$$
with zero entries filled in as appropriate.  For example, if $f=(18,2,8,2,11,...)$, then 
$$\Delta f=(-16,6,-6,9,...)$$
\indent Before moving on, it is important to note that we refer to a sequence $f\in S$ as a polynomial of degree $d$ if there is a (necessarily unique) polynomial $\hat{f}\in k[x]$ such that $f=\hat{f}|_{\NN}$.
\begin{theorem}
Let $f\in S$ be a sequence.
\begin{enumerate}
    \item $\Delta f=0$ iff $f$ is a constant sequence.  Thus, $\Delta f=\Delta g$ iff $g=f+c$, where $c$ is a constant sequence.
    \item If $f$ is a polynomial of degree $d>0$, then $\Delta f$ is a polynomial of degree $d-1$.
    \item Let $c$ be a constant sequence. For any $n>0$, there is a polynomial $g$ of degree $n$ such that $c=\Delta^{n}g$. 
    \item $\Delta^{n}f=0$ iff $f$ is a polynomial of degree $<n$.
\end{enumerate}
\end{theorem}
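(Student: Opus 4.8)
The plan is to handle the four parts essentially in order, with parts~1 and~2 serving as the engine for the rest. For part~1 I would simply unwind the definition: $\Delta f=0$ says $f(x+1)=f(x)$ for every $x\in\NN$, so an immediate induction shows $f$ is constant, and the converse is clear from the definition. The ``thus'' clause then follows from $k$-linearity of $\Delta$: $\Delta f=\Delta g$ iff $\Delta(f-g)=0$ iff $f-g$ is a constant sequence. For part~2, write the associated polynomial as $\hat f(x)=a_dx^d+\cdots$ with $a_d\neq 0$; the top-degree part of $\hat f(x+1)-\hat f(x)$ is $a_d\bigl((x+1)^d-x^d\bigr)=a_d\,d\,x^{d-1}+(\text{lower order})$, and since $k\supseteq\QQ$ has characteristic zero we have $a_d\,d\neq 0$, so $\widehat{\Delta f}$ has degree exactly $d-1$. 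I would also record here the standard remark that a nonzero polynomial over a field cannot vanish on the infinite set $\NN$, which is exactly what makes $\hat f$ (hence the ``degree of a sequence'') well defined.

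For part~3 the clean route is the binomial basis. With $\binom{x}{m}=\frac{x(x-1)\cdots(x-m+1)}{m!}\in\QQ[x]$, Pascal's identity $\binom{x+1}{m}=\binom{x}{m}+\binom{x}{m-1}$ holds for all integer arguments and hence as a polynomial identity, so $\Delta\binom{x}{m}=\binom{x}{m-1}$ for $m\geq 1$. Iterating gives $\Delta^n\binom{x}{n}=\binom{x}{0}=1$, so $g(x)=c\binom{x}{n}$ satisfies $\Delta^n g=c$ by linearity, and $g$ has degree $n$ with leading coefficient $c/n!$ (when $c\neq 0$; otherwise one reads ``degree $n$'' as ``degree $\leq n$'').

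Part~4 is where the real work lies. The direction ``$f$ a polynomial of degree $<n$ implies $\Delta^n f=0$'' follows by iterating part~2: if $\deg f=d$ then $\Delta^d f$ is a constant sequence, so $\Delta^{d+1}f=0$ by part~1, and hence $\Delta^n f=0$ since $n\geq d+1$ (the zero polynomial and the constant case being handled directly). For the converse — the step I expect to be the main obstacle — I would establish the Newton forward-difference identity
$$f(x)=\sum_{m=0}^{x}(\Delta^m f)(0)\binom{x}{m}\qquad\text{for all }x\in\NN,$$
either by induction on $x$ or, more slickly, by writing the shift operator as $E=I+\Delta$ (two commuting $k$-linear operators on $S$), expanding $E^x=\sum_{m=0}^{x}\binom{x}{m}\Delta^m$, and applying both sides to $f$ and evaluating at $0$. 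Granting $\Delta^n f=0$, we get $(\Delta^m f)(0)=0$ for all $m\geq n$, so the sum truncates and $f(x)=\sum_{m=0}^{n-1}(\Delta^m f)(0)\binom{x}{m}$ for every $x\in\NN$ (the terms with $m>x$ vanish since $\binom{x}{m}=0$ there). The right-hand side is the restriction to $\NN$ of an explicit element of $k[x]$ of degree $\leq n-1$, which exhibits $f$ as a polynomial of degree $<n$ and closes the argument. Alternatively, this converse can be obtained by induction on $n$ from parts~1--3 together with the fact that $\binom{x}{0},\ldots,\binom{x}{n-1}$ span the polynomials of degree $<n$, but the Newton identity is the more direct route.
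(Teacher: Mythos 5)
Your argument is correct, and for the two substantive parts it takes a genuinely different route from the paper. Parts 1 and 2 match the paper's proof essentially verbatim (definition unwinding plus $k$-linearity, and comparison of leading terms using characteristic zero). For part 3 the paper takes $g_{0}=x^{n}$ and invokes part 2 repeatedly to get $\Delta^{n}g_{0}=n!$, then rescales, whereas you work in the binomial basis and use $\Delta\binom{x}{m}=\binom{x}{m-1}$ to take $g=c\binom{x}{n}$; the two are equivalent, though yours dovetails with the paper's later Theorem on $\Delta B_{d}=B_{d-1}$. The real divergence is part 4. The paper proves the direction ``$\Delta^{n}f=0\implies\deg f<n$'' by contradiction, \emph{assuming $f$ is a polynomial} of degree $d\geq n$ and applying part 2; this leaves untouched the case that $f$ is not a polynomial sequence at all, so as written the paper only shows the equivalence for polynomial $f$. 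Your Newton forward-difference identity $f(x)=\sum_{m=0}^{x}(\Delta^{m}f)(0)\binom{x}{m}$ (equivalently, expanding the shift operator $E=I+\Delta$) applies to an arbitrary sequence and exhibits $f$ explicitly as the restriction of a degree $<n$ element of $k[x]$ once $\Delta^{n}f=0$, so your version closes that gap and is the stronger argument; the cost is only the small extra step of verifying the identity by induction on $x$ or by the binomial expansion of $E^{x}$. Your converse direction (iterate part 2, then part 1, noting $n\geq d+1$) is the same as the paper's, with the minor improvement that you treat the constant and zero cases explicitly, where the paper's phrase ``$\Delta^{d}f=c$ for some non-zero constant $c$'' is slightly off when $f=0$.
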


\begin{proof}
\hfill\break
\begin{enumerate}
    \item $(\implies)$ Suppose that $\Delta f=0$ and let's assume that $f$ is not a constant sequence.  That is, there exists an $x$ such that $f(x+1)-f(x)\neq0$ which is a contradiction.\\
    $(\impliedby)$ Suppose that $f$ is a constant sequence.  This means that for every $x, f(x+1)-f(x)=0\implies \Delta f=0$.\\
    Furthermore, since $\Delta$ is $k$-linear, $\Delta f=\Delta g$ means that $\Delta f-\Delta g=\Delta(f-g)=0$ iff $f-g=c$ for some constant $c$.
    
    \item Let $f$ be a polynomial of degree $d>0$, say obtained by the restriction of $a_{0}x^{d}+a_{1}x^{d-1}+...$ for $a_{0}\neq0$.  We can the compute the following.
    $$(\Delta f)(x)=f(x+1)-f(x)$$
    $$=a_{0}(x+1)^{d}+a_{1}(x+1)^{d-1}-a_{0}x^{d}-a_{1}x^{d-1}+[\textnormal{lower degree terms}]$$
    $$=a_{0}(x^{d}+dx^{d-1})+a_{1}x^{d-1}-a_{0}x^{d}-a_{1}x^{d-1}+[\textnormal{lower degree terms}]$$
    $$=a_{0}dx^{d-1}+[\textnormal{lower degree terms}]$$
    
    \item Let $g_{0}$ be the sequence described by $g_{0}=x^{n}$.  Be repeated application of $(2)$, $\Delta^{n}g_{0}$ is a constant sequence $a$ (In fact, $a=n!$).  Let $g(x)=\frac{c}{a}g_{0}(x)$ and we can see by $k$-linearity that $\Delta^{n}g=c$.
    
    \item $(\implies)$ Suppose that $\Delta^{n} f=0$ and let's assume that $f$ is a polynomial of degree $d\geq n$.  By repeated application of $(2)$ this implies that $\Delta^{d} f=c$ for some non-zero constant $c$ which is a contradiction.\\
    $(\impliedby)$ Suppose that $f$ is a polynomial of degree $d<n$.  Then by repeated application of $(2)$ we have that $\Delta^{d} f=c$ for some non-zero constant $c$ and by $(1)$ we have that $\Delta^{d+1}f=0$.  Further, since $d+1\leq n$ and $\Delta 0=0$ we have our desired result.
\end{enumerate}
\end{proof}
\indent With these foundational concepts in mind, let's move onto the binomial sequences which we will find have some useful properties.\\
\indent For each $d\geq 0$, define the $d$th binomial sequence $B_{d}$ by
$$B_{d}(x)=\frac{x(x-1)...(x-(d-1))}{d!}=\binom{x}{d}\in k[x]\subseteq S.$$
\indent By convention, $B_{0}$ is the constant sequence equal to 1, and note that $B_{1}(x)=x$.  We may write
$$B_{d}=\left(\underbrace{0,...,0}_{d\textnormal{ times}},\binom{d}{d},\binom{d+1}{d},\binom{d+2}{d},\binom{d+3}{d},...\right).$$
For example,
$$B_{2}=\left(0,0,\binom{2}{2},\binom{3}{2}\binom{4}{2},\binom{5}{2},...\right).$$
\indent Next, so that we can swiftly cite it later on, we will remind the reader of the following well known theorem.
\begin{theorem}
The binomial sequences are integer valued: $B_{d}(\ZZ)\subseteq \ZZ$.
\end{theorem}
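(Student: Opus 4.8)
The plan is to induct on $d$, using the discrete derivative $\Delta$ just introduced together with Pascal's rule. The base case $d=0$ is immediate: $B_{0}$ is the constant sequence $1$, which is integer valued on all of $\ZZ$. The case $d=1$ is also clear, since $B_{1}(x)=x$.

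For the inductive step I would first record the identity $\Delta B_{d}=B_{d-1}$ in $S$. Pascal's rule $\binom{x+1}{d}=\binom{x}{d}+\binom{x}{d-1}$ holds for all integers $x\geq d$, so the two sides --- each a polynomial in $x$ of degree $d$ --- agree at infinitely many points and therefore coincide as elements of $k[x]$, hence as sequences in $S$. Rearranging gives $(\Delta B_{d})(x)=B_{d}(x+1)-B_{d}(x)=B_{d-1}(x)$ for every $x\in\ZZ$, i.e.\ $B_{d}(x+1)=B_{d}(x)+B_{d-1}(x)$.

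Now assume $B_{d-1}(\ZZ)\subseteq\ZZ$. Since $B_{d}(0)=\binom{0}{d}=0\in\ZZ$ for $d\geq 1$, the recurrence lets me propagate integrality in both directions from $x=0$: an auxiliary induction on $n\geq 0$ shows $B_{d}(n)\in\ZZ$, because passing from $n$ to $n+1$ adds the integer $B_{d-1}(n)$; and the rewritten recurrence $B_{d}(x-1)=B_{d}(x)-B_{d-1}(x-1)$ shows by induction on $n\geq 1$ that $B_{d}(-n)\in\ZZ$. Hence $B_{d}(\ZZ)\subseteq\ZZ$, which closes the induction. (If one prefers a self-contained start, one can instead note directly that $B_{d}(x)$ on $\NN$ equals the usual binomial coefficient counting $d$-subsets, hence is an integer there, and only needs the downward propagation above for the negative integers.)

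I do not expect a genuine obstacle here; the only point that needs care is the justification that Pascal's rule, initially evident only for non-negative $x$, upgrades to an identity of polynomials and therefore remains valid at negative integers --- this is precisely what makes the downward induction legitimate. An alternative that avoids the downward step is the reflection formula $B_{d}(-m)=(-1)^{d}\binom{m+d-1}{d}$, which is manifestly integral, but the $\Delta$-based argument is more in keeping with the machinery developed in this section.
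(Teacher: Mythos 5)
Your proof is correct. Note, however, that the paper itself offers no argument for this statement: it is simply recorded as a ``well known theorem'' (with a nod to the literature) and used later, so there is no in-paper proof to compare against. Your induction on $d$ via the recurrence $B_{d}(x+1)=B_{d}(x)+B_{d-1}(x)$ is a clean way to fill that gap, and it meshes well with the paper's own machinery: the identity $\Delta B_{d}=B_{d-1}$ is exactly the paper's next theorem (Theorem 3.3), proved there by the same appeal to Pascal's rule, so there is no circularity in using it --- and you in any case re-justify it independently by the ``two polynomials agreeing at infinitely many points coincide'' argument, which is precisely the point needed to make the downward propagation to negative integers legitimate. The base cases, the upward induction from $B_{d}(0)=0$, and the downward step $B_{d}(x-1)=B_{d}(x)-B_{d-1}(x-1)$ are all in order, and your remark that the reflection formula $B_{d}(-m)=(-1)^{d}\binom{m+d-1}{d}$ gives an alternative route around the downward step is also correct. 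One could quibble that for the paper's actual use of the theorem (integrality of values of a candidate Hilbert polynomial at $0,1,\dots,n$ and in Theorem 4.1) only integrality on $\NN$ is needed, where the combinatorial interpretation settles it immediately, but proving the stated claim on all of $\ZZ$ as you do is the more faithful reading of the statement.
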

\indent In addition to this, we will find that the following theorem will also provide invaluable utility when deriving the superior algorithm.
\begin{theorem}
$\Delta B_{d}(x)=B_{d-1}(x)$
\end{theorem}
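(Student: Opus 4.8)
The plan is to prove the identity directly from the explicit product formula for $B_d$, working inside $k[x]\subseteq S$ so that a single polynomial identity settles every entry of the sequence at once (one may tacitly assume $d\geq 1$, since otherwise $B_{d-1}$ is not defined). First I would unwind the definition of $\Delta$ and substitute the closed form:
\[
(\Delta B_d)(x)=B_d(x+1)-B_d(x)=\frac{(x+1)x(x-1)\cdots(x-(d-2))}{d!}-\frac{x(x-1)\cdots(x-(d-1))}{d!}.
\]

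The key observation is that the two numerators share the common factor $x(x-1)\cdots(x-(d-2))$, a descending product of exactly $d-1$ terms; the remaining factor is $(x+1)$ in the first summand and $(x-(d-1))$ in the second. Factoring this out gives
\[
(\Delta B_d)(x)=\frac{x(x-1)\cdots(x-(d-2))}{d!}\bigl[(x+1)-(x-(d-1))\bigr]=\frac{x(x-1)\cdots(x-(d-2))}{d!}\cdot d,
\]
and cancelling the $d$ against $d!=d\cdot(d-1)!$ leaves precisely $\dfrac{x(x-1)\cdots(x-(d-2))}{(d-1)!}=B_{d-1}(x)$, as desired. An alternative, essentially equivalent route would be to use that $B_d$ is integer valued on $\NN$ (stated just above) together with Pascal's rule $\binom{x+1}{d}=\binom{x}{d}+\binom{x}{d-1}$ to get the equality entrywise for every $x\in\NN$; since two elements of $S$ that agree at every $x\in\NN$ are equal, this also finishes the argument.

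I do not expect a genuine obstacle: the content is a routine manipulation. The only points that warrant care are the bookkeeping of the common factor — verifying that the truncated product $x(x-1)\cdots(x-(d-2))$ really has $d-1$ terms, so that the arithmetic with $d!$ comes out right — and a quick sanity check of the degenerate base case $d=1$, where that product is empty and one confirms directly that $(\Delta B_1)(x)=(x+1)-x=1=B_0(x)$, consistent with the general formula.
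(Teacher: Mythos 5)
Your argument is correct. The paper's entire proof is a one-line appeal to Pascal's rule, $\binom{x+1}{d}-\binom{x}{d}=\binom{x}{d-1}$, applied with the variable $x$; your second, ``alternative'' route is therefore exactly the paper's proof. Your primary route goes one step deeper: rather than citing Pascal's rule, you re-derive it from the falling-factorial formula by extracting the common factor $x(x-1)\cdots(x-(d-2))$ and cancelling $d$ against $d!$. That buys a self-contained verification which is manifestly a polynomial identity in $k[x]$ (so equality in $S$ is immediate, with no need to argue entrywise over $\NN$), and your explicit check of the degenerate case $d=1$, where the common factor is the empty product, is a nice piece of care the paper omits. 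The cost is only a little extra bookkeeping; mathematically the two proofs carry the same content.
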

\begin{proof}
This can trivially be shown using Pascals rule.
$$\Delta B_{d}(x)=\binom{x+1}{d}-\binom{x}{d}=\binom{x}{d-1}=B_{d-1}(x)$$
\end{proof}

\section{Discrete Derivative Algorithm}
\indent Now that we have the necessary mechanisms in place from Section 3, let's derive the superior algorithm.  First, let's repackage the integer partition into the following form (which has been employed in other literature as well \cite{SS}).
$$\lambda=(\lambda_{1}^{r_{1}},\lambda_{2}^{r_{2}},...,\lambda_{e}^{r_{e}})$$
$$=(\underbrace{\lambda_{1},\lambda_{1},...,\lambda_{1}}_{r_{1}\textnormal{ times}},\underbrace{\lambda_{2},\lambda_{2},...,\lambda_{2}}_{r_{2}\textnormal{ times}},...,\underbrace{\lambda_{e},\lambda_{e},...,\lambda_{e}}_{r_{e}\textnormal{ times}})$$
Where $\lambda_{1}>\lambda_{2}>...>\lambda_{e}$ and the size of $\lambda$ is $r=r_{1}+r_{2}+...+r_{e}$.\\
\indent With this new notation, we can now say that if $h(x)$ is a Hilbert polynomial then 
$$h(x)=\sum_{i=1}^{r_{1}}\binom{x+\lambda_{1}-i}{\lambda_{1}-1}+\sum_{i=r_{1}+1}^{r_{1}+r_{2}}\binom{x+\lambda_{2}-i}{\lambda_{2}-1}+...+\sum_{i=r-r_{e}+1}^{r}\binom{x+\lambda_{e}-i}{\lambda_{e}-1}.$$
\indent From here, the general idea of the superior algorithm is fairly straightforward.  If we consider the sequence $h\in S$ generated by $h(x)$ then by Theorem 4 and $k$-linearity we know that $\Delta^{\lambda_{1}-1} h=r_{1}$.  With this in mind, we count how many times ($\lambda^{*}$) we apply $\Delta$ until we are left with a constant sequence $r$ and subtract the contribution $\sum_{i=1}^{r}\binom{x+\lambda^{*}-i}{\lambda^{*}-1}$ leaving us with a new polynomial sequence in which we apply the same procedure.  In practice however, if the user provides a polynomial $p(x)$ with degree $n$ then we begin by allocating a vector of length $n+1$ with values $\{p(0),p(1),p(2),...,p(n)\}$ and we know that the sequence after an arbitrary number of discrete derivatives will have a unique polynomial associated with it due to the polynomial interpolation theorem and Theorem 2(2).\\
\indent With the above framework in place, let's begin in formalizing the superior algorithm by defining some subroutines.  First, we define the trivial method $notConstant(p,end)$ which returns $true$ if the sequence $\{p(0),p(1),...,p(end)\}$ is constant (or of size $1$) and $false$ otherwise.  Next, we shall define the $reduce(p)$ which applies $\Delta$ to the polynomial sequence $p$ until we are left with a non-zero constant sequence.\\
\begin{algorithm}[H]
\SetAlgoLined
 {\bf Input}: $p$: Polynomial sequence.\\
 {\bf Input}: $n$: $size(p)-1$.\\
 {\bf Output}: $m$: The minimum number of times $\Delta$ must be applied to $p$ so that $\Delta^{m}p=c$ for some constant $c$.\\
 {\bf Output}: $c$: The value of the constant sequence $\Delta^{m}p=c$.\\
 $counter\gets 0$\;
 $end\gets n$\;
 \While {$notConstant(p, end)$} {
   \For {$i=0$ : $end-1$} {
     $p(x)\gets p(x+1)-p(x)$\;
   }
   $counter\gets counter+1$\;
   $end\gets end-1$\;
 }
 \Return $(counter, p(0))$\;
 \caption{$reduce(p, n)$}
\end{algorithm}
\indent For the sake of addressing some potential concerns it is important to note that if the user passes in a polynomial sequence $p$, a positive integer $n$ and $n$ is the degree of $p$ then the above routine will exit after $notConstant$ checks a sequence of size $1$ due to repeated application of Theorem 2(2).  By the same argument, in cases where $deg(p)<n$ we know that the robustness of the algorithm is also not affected with the only difference being that the $reduce$ method will return after the $notConstant$ routine checks a sequence of size greater than 1.  In fact, we will later find that in the context of the superior recovery algorithm, these are the only possible use cases.\\
\indent Next, we define the method $subtract(p,n,\lambda,start,end)$ which for $x\in\{0,1,...,n\}$ subtracts $\sum_{i=start}^{end}\binom{x+\lambda-i}{\lambda-1}$ from $p(x)$ and returns the resulting sequence.\\
\indent Finally, let's define the method $isIntegerSeq(p)$ simply as the method which returns $true$ if all the elements in the sequence are in $\ZZ$ and returns $false$ otherwise.  In the final formalization of the superior algorithm, $isIntegerSeq$ along with one other check will be used to verify that the polynomial provided by the user is in fact a Hilbert polynomial.  With all the subroutines in place, let's define the superior recovery routine.\\
\begin{algorithm}[H]
\SetAlgoLined
 {\bf Input}: $p(x)\in\QQ[x]$\\
 {\bf Output}: $\lambda$ if $p(x)$ is a Hilbert polynomial and $false$ otherwise.\\
 $n\gets\deg(p(x))$\;
 $p\gets \{p(0),p(1),...,p(n)\}$\;
 $\lambda\gets()$\;
 $s\gets 1$\;
 $e\gets 0$\;
 \If{$isIntegerSeq(p)==false$} {
   \Return $false$\;
 }
 \While {$p\neq\vec{0}$} {
   $p^{*}\gets p$\;
   $(m,r)\gets reduce(p^{*},n)$\;
   \If{$r<0$}{
     \Return $false$\;
   }
   $\lambda\gets(\lambda,(m+1)^{r})$\;
   $e\gets s+r-1$\;
   $p\gets subtract(p,n,m+1,s,e)$\;
   $s\gets s+r$\;
 }
 \Return $\lambda$\;
 \caption{Discrete Derivative Recovery Algorithm}
\end{algorithm}
\indent Now that the algorithm has been defined, let's address some potential questions regarding the algorithm.  First, let's address the first ``if" statement in the algorithm.  This check addresses two concerns.  The first being that by Theorem 3 it must be the case that if $p(x)$ is a Hilbert polynomial then $\forall x\in\ZZ,p(x)\in\ZZ$.  Furthermore, simply checking the values $\{p(0),p(1),...,p(\deg(p(x)))\}$ is sufficient enough due to the following theorem.\\
\begin{theorem}
\cite{CC}
A degree $n$ polynomial with rational coefficients is integer-valued iff it takes integer values on $n+1$ consecutive integer values.
\end{theorem}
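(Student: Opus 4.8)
The plan is to handle the forward implication in a single line and then establish the converse by induction on the degree, leveraging the discrete derivative $\Delta$ and the binomial sequences $B_d$ introduced in Section 3.

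For the forward direction there is nothing to do: if $p(x)$ is integer-valued then it takes integer values on \emph{every} integer, in particular on any $n+1$ consecutive ones. For the converse, suppose $p$ has degree $n$, rational coefficients, and $p(a), p(a+1), \dots, p(a+n) \in \ZZ$ for some $a \in \ZZ$. I would first reduce to the case $a = 0$: the shift $q(x) = p(x+a)$ has the same degree and rational coefficients, $q(0), \dots, q(n)$ are exactly $p(a), \dots, p(a+n) \in \ZZ$, and $q$ is integer-valued if and only if $p$ is. So it suffices to prove that $p(0), p(1), \dots, p(n) \in \ZZ$ forces $p(m) \in \ZZ$ for all $m \in \ZZ$.

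Now I would induct on $n = \deg p$. The base case $n = 0$ is immediate, since then $p$ is the constant sequence $p(0) \in \ZZ$. For $n > 0$, view $p$ as an element of $S$ and pass to $\Delta p$. By Theorem 3.1(2), $\Delta p$ is a polynomial of degree $n-1$; and since $(\Delta p)(x) = p(x+1) - p(x)$, the $n$ values $(\Delta p)(0), \dots, (\Delta p)(n-1)$ are differences of consecutive terms among the integers $p(0), \dots, p(n)$, hence integers. The inductive hypothesis applied to $\Delta p$ then yields $(\Delta p)(x) \in \ZZ$ for all $x \in \ZZ$. Finally, from $p(x+1) = p(x) + (\Delta p)(x)$ together with $p(0) \in \ZZ$, an induction upward gives $p(m) \in \ZZ$ for all $m \geq 0$, and rewriting the relation as $p(x) = p(x+1) - (\Delta p)(x)$ an induction downward gives $p(m) \in \ZZ$ for all $m < 0$. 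Hence $p$ is integer-valued.

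I do not anticipate a genuine obstacle here; the only point requiring care is the bookkeeping, namely that the hypothesis of exactly $n+1$ consecutive integer values is precisely what is needed for the induction to close, since each application of $\Delta$ consumes one value and lowers the degree by one. An equivalent and equally short route, which makes the role of Section 3 transparent, is to expand $p = \sum_{d=0}^{n} c_d B_d$ in the binomial basis; applying $\Delta^k$ and using Theorem 3.3 repeatedly together with $B_m(0) = 0$ for $m \geq 1$ shows $c_k = (\Delta^k p)(0) = \sum_{j=0}^{k} (-1)^{k-j} \binom{k}{j} p(j)$, which is an integer once $p(0), \dots, p(n) \in \ZZ$; then Theorem 3.2 shows $p$ is a $\ZZ$-linear combination of integer-valued sequences, hence integer-valued.
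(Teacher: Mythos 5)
Your proof is correct. Note that the paper itself offers no argument for this statement --- it is quoted from the reference \cite{CC} --- so there is no internal proof to compare against; what you have written supplies one, and it does so using only the machinery the paper already sets up in Section 3 (Theorems 3.1(2), 3.2, 3.3), which fits the context well. Both of your routes are sound: the induction on degree via $\Delta$, with the shift $q(x)=p(x+a)$ to normalize to the values $p(0),\dots,p(n)$, closes exactly as you say, since $\Delta p$ has degree $n-1$ and inherits $n$ consecutive integer values; and the binomial-basis route, writing $p=\sum_{d=0}^{n}c_d B_d$ and extracting $c_k=(\Delta^k p)(0)$, shows the coefficients are integers (indeed you do not even need the explicit alternating-sum formula --- $(\Delta^k p)(0)$ is built from $p(0),\dots,p(k)$ by repeated subtraction, the same observation the paper uses in its Theorem 4.2), after which Theorem 3.2 finishes. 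The only point worth flagging, and it is bookkeeping rather than a gap: the paper's $\Delta$ and Theorem 3.1(2) are stated for sequences indexed by $\NN$, while ``integer-valued'' concerns all of $\ZZ$; your use of $(\Delta p)(x)=p(x+1)-p(x)$ at negative integers is legitimate because the difference of the polynomial $\hat p(x+1)-\hat p(x)$ is itself a polynomial agreeing with the sequence $\Delta p$ on $\NN$, hence is the unique polynomial representing it, so the identity holds at every integer. You might add that one sentence for completeness.
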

\indent In addition to the previously mentioned concern, the first ``if" statements also addresses ideas surrounding the following theorem.
\begin{theorem}
Let $(m,r)=reduce(p,n)$ where $p$ is an polynomial sequence with values in $\QQ$.  If $r\in \QQ\backslash \ZZ$ then at least one of the values in $\{p(0),...,p(n)\}$ must be in $\QQ\backslash \ZZ$.
\end{theorem}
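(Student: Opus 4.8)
The plan is to prove the contrapositive: assuming every value in $\{p(0),\dots,p(n)\}$ lies in $\ZZ$, I will show that the constant $r$ returned by $reduce(p,n)$ also lies in $\ZZ$. The whole argument rests on the elementary observation that the discrete derivative preserves integrality: if $f(x)$ and $f(x+1)$ are integers, then $(\Delta f)(x)=f(x+1)-f(x)$ is an integer, being a difference of integers.

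First I would unwind what $reduce$ actually computes. Writing $p^{(0)}=p$ and letting $p^{(k)}$ denote the finite sequence stored in the array after $k$ passes through the inner \texttt{for} loop, the code performs $p^{(k+1)}(x)=p^{(k)}(x+1)-p^{(k)}(x)$ for $x=0,\dots,(n-k)-1$; that is, $p^{(k+1)}$ is $\Delta p^{(k)}$ on the window $\{0,\dots,n-k-1\}$, and an easy induction gives $p^{(k)}(x)=(\Delta^{k}p)(x)$ for $0\le x\le n-k$. The loop terminates after $m=counter$ passes with $\Delta^{m}p$ constant on the surviving window, and returns $r=p^{(m)}(0)=(\Delta^{m}p)(0)$.

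Next I would run an induction on $k$ to establish the loop invariant: for each $k$ with $0\le k\le m$, every entry of $p^{(k)}$, i.e.\ each of $(\Delta^{k}p)(0),\dots,(\Delta^{k}p)(n-k)$, lies in $\ZZ$. The base case $k=0$ is precisely the hypothesis on $\{p(0),\dots,p(n)\}$. For the inductive step, each entry $p^{(k+1)}(x)=p^{(k)}(x+1)-p^{(k)}(x)$ with $0\le x\le n-k-1$ is a difference of two entries of $p^{(k)}$, hence an integer by the inductive hypothesis. Taking $k=m$ in the invariant yields $r=(\Delta^{m}p)(0)\in\ZZ$, which is exactly the contrapositive of the claim, and the theorem follows.

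I do not expect a genuine obstacle here; the only points requiring care are the bookkeeping of the shrinking window (so that the indices referenced in $p^{(k+1)}(x)=p^{(k)}(x+1)-p^{(k)}(x)$ always stay inside the range on which the invariant has already been established) and the remark that the conclusion is independent of exactly when $notConstant$ first reports a constant sequence: integrality is maintained after every pass, so whatever value of $m$ the loop halts at, the returned $r$ is an integer. If desired, one could phrase the preservation statement as a standalone lemma ($\Delta$ carries $\ZZ$-valued finite sequences to $\ZZ$-valued finite sequences) and then the theorem becomes an immediate iteration of that lemma together with Theorem 3.1(2), which guarantees the loop does terminate.
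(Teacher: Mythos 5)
Your proof is correct and uses the same idea as the paper: integrality is preserved because each entry produced by $reduce$ is a difference of integers, so if all of $\{p(0),\dots,p(n)\}$ are integers then so is $r$. The paper states this informally as a one-line contradiction argument, while you formalize it as a loop invariant proved by induction; the substance is identical.
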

\begin{proof}
Suppose that $r\in\QQ\backslash \ZZ$ and let's assume that all the values in $\{p(0),...,p(n)\}$ are in $\ZZ$ then since $\ZZ$ is a subring of $\QQ$, and $reduce$ generates $r$ by simply subtracting elements in $p$ from other elements in $p$ then $r\in\ZZ$ which is a contradiction. 
\end{proof}
\indent The importance of this theorem is realized when we note that the leading coefficient of a Hilbert polynomial $h(x)$ is $\frac{r}{(\lambda_{i}-1)!}$ for some positive integer $r$ and by the observations in the proof of Theorem 2(3) we know that the second value in the tuple returned by the $reduce(h,n)$ method is $r$.  It is also important to note that checking if $p$ is all integers is only needed in the beginning of the algorithm since if that is the case then $p$ will remain integer valued due to the $subtract$ method only subtracting integer contributions from $p$.\\
\indent Moreover, the check for $r<0$ in the ``if" statement inside the ``while" loop is needed since as we mentioned earlier, the leading coefficient of a Hilbert polynomial must be a \textit{positive} integer multiple of $\frac{1}{(\lambda_{1}-1)!}$.\\
\indent Next, let's analyze the complexity of this algorithm by first noting that we cannot derive the average case complexity since we must concede to the fact that the distribution of Hilbert polynomials in $\QQ[x]$ is not precisely known.  However, we can derive the worst case complexity.  This occurs when $p(x)$ is a Hilbert polynomial and the $\lambda$ which generates it is $((\deg(p(x)) + 1)^{r_{1}},(\deg(p(x)))^{r_{2}},(\deg(p(x)) - 1)^{r_{3}},(\deg(p(x)) - 2)^{r_{4}},...,1^{r_{\deg(p(x))+1}})$.  To derive the worst case complexity, we begin by noting that the complexity of $reduce(p^{*},n)$ is $\mathcal{O}(\deg(p^{*})^{2})$.  Furthermore, the complexity of $subtract(p,n,m+1,j,r)$ is $\sum_{x=0}^{n}\sum_{i=j}^{r}\mathcal{O}_{b}(x+m+1-i,m+1-1)=\sum_{x=0}^{n}\sum_{i=j}^{r}\mathcal{O}_{b}(x+m+1-i,m)$.  Next, in the context of the worst case $\lambda$, we can rewrite the ``while" loop as a ``for" loop from $k=\deg(p(x))+1$ down to $k=1$.  In considering this, within a single iteration, the complexity of $reduce(p^{*},n)$ becomes $\mathcal{O}(k^{2})$ and the complexity of $subtract(p,n,m+1,j,r)$ is what we derived above but $m+1$ is exchanged for $k$ and $r$ is exchanged for $r_{\deg(p(x))+1-k+1}=r_{\deg(p(x))+2-k}$.  All in all, the worst case complexity of this algorithm is $$\mathcal{O}\left(\sum_{k=1}^{\deg(p(x))+1}\left[\mathcal{O}(k^{2}) +\sum_{x=0}^{n}\sum_{i=j}^{r_{\deg(p(x))+2-k}}\mathcal{O}_{b}(x+k-i,k-1)\right]\right).$$
When comparing the above expression to the complexity of the naive algorithm, it is important to remark that $r_{i}$ is not known a priori.  However, provided that the user inputs a large enough $r_{max}$ to recover $\lambda$, the naive algorithm has to compute the binomial coefficients associated with the correct $\lambda$ as well as all candidate $\lambda$, whereas the discrete derivative algorithm only computes the binomial coefficients associated with the correct $\lambda$.   

\section{Conclusions}
\indent In this paper, we began by considering a fairly basic and naive $\lambda$ recovery algorithm which we found had some very harsh limitations.  From there, we reviewed the topic of polynomial sequences, discrete derivatives, and some useful properties of binomial sequences.  Finally, by utilizing discrete derivatives, we then devised a robust algorithm whose worst case complexity is a considerable improvement over the complexity of the naive algorithm.

\bibliographystyle{acm}
\bibliography{bibliography}

\begin{thebibliography}{1}

\bibitem{AL}
{\sc Alberelli, D., and Lella, P.}
\newblock Strongly stable ideals and {H}ilbert polynomials.
\newblock {\em J. Softw. Algebra Geom. 9\/} (2019), 1--9.

\bibitem{CC}
{\sc Cahen, P.-J., and Chabert, J.-L.}
\newblock {W}hat {Y}ou {S}hould {K}now {A}bout {I}nteger-{V}alued
  {P}olynomials.
\newblock {\em The American Mathematical Monthly 123}, 4 (2016), 311--337.

\bibitem{DLRUZ}
{\sc Donato, J., Lewis, M., Ryan, T., Udrenas, F., and Zhang, Z.}
\newblock {T}he {S}um of the {B}etti {N}umbers of {S}mooth {H}ilbert {S}chemes.
\newblock {\em Journal of Algebraic Combinatorics 55}, 2 (Mar 2022), 393--411.

\bibitem{M2}
{\sc Grayson, D.~R., and Stillman, M.~E.}
\newblock {M}acaulay2, a {S}oftware {S}ystem for {R}esearch in {A}lgebraic
  {G}eometry.
\newblock Available at \url{http://www.math.uiuc.edu/Macaulay2/}.

\bibitem{Mac}
{\sc Macaulay, F.~S.}
\newblock {S}ome {P}roperties of {E}numeration in the {T}heory of {M}odular
  {S}ystems.
\newblock {\em Proc. London Math. Soc. (2) 26\/} (1927), 531--555.

\bibitem{Page}
{\sc Page, D.~R.}
\newblock {G}eneralized {A}lgorithm for {R}estricted {W}eak {C}omposition
  {G}eneration.
\newblock {\em Journal of Mathematical Modelling and Algorithms in Operations
  Research 12}, 4 (Dec 2013), 345--372.

\bibitem{SS}
{\sc Skjelnes, R., and Smith, G.~G.}
\newblock {S}mooth {H}ilbert {S}chemes: {T}heir {C}lassification and
  {G}eometry.
\newblock {\em Journal für die reine und angewandte Mathematik (Crelles
  Journal) 2023}, 794 (2023), 281--305.

\end{thebibliography}
\end{document}